\newtheorem{theorem}{Theorem}[section]
\newtheorem{corollary}[theorem]{Corollary}
\newtheorem{lemma}[theorem]{Lemma}
\newtheorem{remak}[theorem]{Remark}
\theoremstyle{definition}
\newtheorem{definition}[theorem]{Definition}
\newcommand{\Z}{\mathbb{Z}}
\newcommand{\B}{\mathcal{B}}
\newcommand{\R}{\mathbb{R}}
\newcommand{\N}{\mathbb{N}}
\newcommand{\disp}{\displaystyle}
\date{}
\title[Preservation of Hausdorff spectrum]{On the spectral Hausdorff dimension of 1D discrete Schr\"odinger operators under power decaying perturbations}
\author{V. R. Bazao}\address{Departamento de Matem\'{a}tica -- UFSCar, S\~{a}o Carlos, SP, 13560-970 Brazil} \author{S. L. Carvalho}\address{Departamento de Matem\'atica, UFMG, Belo Horizonte, MG,  30161-970 Brazil}  \author{C. R. de Oliveira} \address{Departamento de Matem\'{a}tica -- UFSCar, S\~{a}o Carlos, SP, 13560-970 Brazil}
\begin{document}
\maketitle

\begin{abstract}
 We show that spectral Hausdorff dimensional properties of discrete Schr\"odinger operators with (1)~Sturmian potentials of bounded density and (2)~a class of sparse potentials are preserved under suitable  polynomial decaying perturbations, when the spectrum of these perturbed operators have some singular continuous component.
 \end{abstract}

\

\thispagestyle{empty}

\section{Introduction}\label{intro}

We present results about preservation of spectral Hausdorff dimensional properties for some discrete Schr\"odinger operators~$H$, with (real) potentials $V=\left\{V(n)\right\}$,  on $l^2(\Z)$ or $l^{2}(\N)$,  of the form
\begin{equation}\label{H} 
(H\psi)(n) = \psi(n+1) + \psi(n-1) + V(n)\psi(n),
\end{equation}
under suitable power decaying (real) perturbations $P=\{P(n)\}$, that is, when~$V$ is replaced with $V+P$. Here,  the term  spectral measure of~\eqref{H} acting in~$l^2(\mathbb N)$ refers to the measure associated with the cyclic vector~$\delta_{1}$, where $\delta_j=(\delta_{ij})_{i\ge 1}$; in case~\eqref{H} acts in~$l^{2}(\Z)$, then the terminology refers to the spectral measures associated with both~$\delta_{0}$ and~$\delta_{1}$. On the half-line~$\mathbb N$, each self-adjoint realization of~$H$ in~\eqref{H} is given by a phase boundary condition
\begin{equation}\label{boucon}
\psi(0)\cos\varphi+\psi(1)\sin\varphi=0,\quad \varphi\in(-\pi/2,\pi/2],
\end{equation} which will be denoted by~$H_\varphi$.

Denote by $\sigma(T)$ the spectrum of a self-adjoint operator~$T$, and by~$\sigma_{\mathrm{p}}(T)$, $\sigma_{\mathrm{sc}}(T)$ its pure point and singular continuous spectra, respectively; if~$\mu$ is a Borel measure on~$\mathbb R$, we say that~$\mu$ is supported on the Borelian~$S$ if $\mu(\mathbb R\setminus S)=0$.
 
We are particularly interested in the family~$\{H_{\lambda,\theta,\rho}\}$ of the so-called Sturmian operators, that is, 
the family of operators~\eqref{H} with almost periodic Sturmian potentials 
\[ 
V(n)=V_{\lambda,\theta,\rho}(n)=\lambda \chi_{[1-\theta,1)}(n\theta+\rho \ \textrm{mod}\,1),\quad n\in\Z,
\]
 where~$0 \neq \lambda \in \R$ is the coupling constant,~$\theta \in [0,1)$ is an irrational rotation number of bounded density (in Section~\ref{applic} this notion is recalled) and~$\rho \in [0,1)$ is the phase. It is well known~\cite{D,DKL,JL2} that each~$H_{\lambda,\theta,\rho}$  has purely~$\alpha$-Hausdorff continuous spectrum (and that~$\sigma(H_{\lambda,\theta,\rho})$ has zero Lebesgue measure) for some $\alpha\in (0,1)$. Here, $\alpha = \frac{2\gamma_1}{\gamma_1 + \gamma_2}$, where $\gamma_1(\theta,\lambda),\gamma_2(\theta,\lambda)>0$ are such that
 \[
 C_{1}L^{\gamma_{1}}\leq \|u\|_{L}\leq C_{2}L^{\gamma_{2}},
 \] for positive constants $C_1,C_2$ and every solution~$u$ to the eigenvalue equation $H_{\lambda,\theta,\rho}u=Eu$ with normalized initial conditions
\begin{equation}\label{NIC}
|u(0)|^2+|u(1)|^2=1;
\end{equation} $\|u\|_L$ is the truncated $l^2([1,L])$ norm (see details ahead). It is precisely in the proof of such inequalities that the bounded density hypothesis plays an important role. 
 
Since the proof of this~$\alpha$-Hausdorff continuity property relies heavily on the particular structure of Sturmian potentials, it is an interesting question whether it is preserved under certain perturbations. However, since the spectra of the (bounded density) Sturmian operators are purely singular continuous of zero Lebesgue measure, by considering a rank one perturbation $a\delta_1$ with intensity $a\in\mathbb R$, it follows from Simon-Wolff's criterion~\cite{SW} that the spectrum of the perturbed operator $H_{\lambda,\theta,\rho}+ a\delta_1$ is pure point for  (Lebesgue) a.e.~$a$,  whereas for~$a$ in a generic set (i.e., Baire typical) of intensities, the spectrum of the perturbed operator has a singular continuous~\cite{dRMS,gordon} component.  Thus, the following stability result for suitable decaying perturbations, namely, a preservation of the~$\alpha$-Hausdorff continuity of spectral measures, only applies when the perturbed Sturmian operator has a singular continuous component. 

Since there is a lack of results on preservation of (nontrivial, i.e., different form zero and one) Hausdorff dimensional properties under perturbations, we underline that the results below can be considered interesting even in cases they apply  for parameters  (Liouville-like, say) in a set of zero Lebesgue measure.

\begin{theorem}\label{teorema1.1}
Let $\theta$ be a bounded density irrational number and $\gamma_1,\gamma_2,\alpha$ as above. Then, for every $\rho\in [0,1)$ and $\lambda\neq 0,$ any singular continuous component of the spectral measure associated with the operator
\begin{equation} \label{P.S.}
(H^{P}_{\lambda,\theta,\rho}\psi)(n):=(H_{\lambda,\theta,\rho}\psi)(n) +P(n)\psi(n),\quad \psi\in l^{2}(\Z),
\end{equation}
with the perturbation satisfying $|P(n)|\leq C(1+|n|)^{-p}$, for all~$n\in\Z$, for some~$C>0$ and~$p>3\gamma_{2}-\gamma_{1}$, is also purely~$\alpha$-Hausdorff continuous.
\end{theorem}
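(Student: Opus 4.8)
The core mechanism here is the Jitomirskaya–Last theory connecting power-law bounds on generalized eigenfunctions (transfer matrix / truncated norm growth) to Hausdorff-continuity of the spectral measures via subordinacy theory. So my plan has three stages: (1) establish that the unperturbed Sturmian solutions obey the two-sided power bounds $C_1 L^{\gamma_1} \le \|u\|_L \le C_2 L^{\gamma_2}$ — but this is *given* in the excerpt; (2) show that adding a perturbation $P$ with $|P(n)| \le C(1+|n|)^{-p}$ preserves, up to changing constants, bounds of the *same* exponents $\gamma_1, \gamma_2$ for the perturbed solutions, provided $p$ is large enough; (3) invoke the Jitomirskaya–Last / Gilbert–Pearson subordinacy machinery to conclude that on the set where the perturbed operator has singular continuous spectrum, the corresponding spectral measure is $\alpha$-Hausdorff continuous with $\alpha = 2\gamma_1/(\gamma_1+\gamma_2)$.

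Let me spell out stage (2), which is the heart of the matter. Let $T_n(E)$ denote the one-step transfer matrix for the unperturbed operator and $T_n^P(E)$ for the perturbed one. Writing $\Phi_n = T_n \cdots T_1$ and $\Phi_n^P = T_n^P \cdots T_1^P$, the difference $T_n^P - T_n$ has operator norm bounded by $|P(n)| \le C(1+|n|)^{-p}$. The standard telescoping / Duhamel-type estimate gives
\[
\|\Phi_n^P\| \le \|\Phi_n\| \prod_{k=1}^n \bigl(1 + |P(k)| \,\|\Phi_k\| \,\|\Phi_k^{-1}\|\bigr)
\]
or a similar variant. Since for Sturmian operators $\det \Phi_k = 1$ and $\|\Phi_k\|, \|\Phi_k^{-1}\| \le C_2' k^{\gamma_2}$ (the upper bound applies to all solutions, hence to the matrix norm), the product is controlled by $\prod_k (1 + C k^{2\gamma_2 - p})$, which converges precisely when $2\gamma_2 - p < -1$, i.e. $p > 2\gamma_2 + 1$. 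To get the sharper threshold $p > 3\gamma_2 - \gamma_1$ claimed in the theorem, one needs a more careful bookkeeping: instead of bounding $\|\Phi_k^{-1}\|$ crudely, one uses the lower bound $\|u\|_L \ge C_1 L^{\gamma_1}$ to argue that the "bad" direction cannot be amplified too much, or one runs the Gronwall estimate directly on the truncated $\ell^2$ norms $\|u^P\|_L$ rather than on sup-norms of transfer matrices, exploiting that the perturbation series for $u^P - u$ in terms of the Green's kernel of the half-line restriction costs only $\sum_k |P(k)| \,\|u\|_k$-type factors. Matching $\sum_n n^{-p} \cdot n^{\gamma_2} \cdot n^{\gamma_2 - \gamma_1} < \infty$ (the three exponents being: decay, growth of solution, and growth of the companion solution relative to the Wronskian normalization) yields exactly $p > 3\gamma_2 - \gamma_1$. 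One must check that the same two-sided bounds transfer for *all* energies $E$ in the relevant (singular continuous) part of the spectrum with uniform constants, or at least for a.e.\ $E$ with respect to each spectral measure — uniformity in $E$ on compact spectral sets is what the Sturmian estimates actually provide.

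Once stage (2) is done, stage (3) is essentially a citation: by the Jitomirskaya–Last inequalities (as used in \cite{D,DKL,JL2} for the unperturbed case), power bounds $\|u^P\|_L \asymp$ between $L^{\gamma_1}$ and $L^{\gamma_2}$ for the solutions of $H^P u = Eu$ imply that the $m$-function of the half-line operator obeys matching bounds as $\mathrm{Im}\, z \to 0$, which by the Jitomirskaya–Last characterization of Hausdorff-continuous measures forces the local behavior $\mu^P(B(E,\varepsilon)) = O(\varepsilon^{\alpha})$ on the set of such $E$, with $\alpha = 2\gamma_1/(\gamma_1+\gamma_2)$; upper-Hausdorff-continuity of exponent $\alpha$ on a set carrying the singular continuous part gives that this component is purely $\alpha$-Hausdorff continuous. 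Passing from the half-line to the whole-line operator \eqref{P.S.} is the usual direct-sum / rank-two decoupling argument (the whole-line spectral measures are controlled by the two half-line ones with all boundary phases), so the conclusion transfers to $\delta_0, \delta_1$.

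\emph{Main obstacle.} The delicate point is getting the \emph{sharp} exponent threshold $p > 3\gamma_2 - \gamma_1$ rather than the naive $p > 2\gamma_2 + 1$: this requires the perturbation-series / Gronwall estimate to be run against the truncated $\ell^2$ norm and to use \emph{both} the upper bound $\|u\|_L \lesssim L^{\gamma_2}$ \emph{and}, implicitly, the lower bound $\|u\|_L \gtrsim L^{\gamma_1}$ (equivalently, control on the decaying/subordinate solution), so that the loss per step is $n^{\gamma_2 - \gamma_1}$ relative to the normalized solution rather than $n^{\gamma_2}$. A secondary technical issue is ensuring the constants in the transferred bounds are uniform in $E$ over the spectrum (or measurable and a.e.\ finite), since the subordinacy conclusion is applied pointwise in $E$ and then integrated against the spectral measure.
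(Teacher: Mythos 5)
Your overall architecture --- the known two-sided power-law bounds for the Sturmian solutions, a perturbative step transferring the same power-law behavior to solutions of $(H+P)u=Eu$, and then the Jitomirskaya--Last power-law subordinacy criterion applied pointwise on the set supporting the singular continuous part (with the half-line-to-whole-line reduction) --- is exactly the paper's. The genuine gap is in your stage (2): you never actually derive the threshold $p>3\gamma_2-\gamma_1$. Your Duhamel/telescoping estimate on transfer matrices gives $p>2\gamma_2+1$, and the ``more careful bookkeeping'' you then invoke is summarized by the condition $\sum_n n^{-p}\,n^{\gamma_2}\,n^{\gamma_2-\gamma_1}<\infty$, which yields $p>2\gamma_2-\gamma_1+1$, \emph{not} $p>3\gamma_2-\gamma_1$; these agree only when $\gamma_2=1$ (and in the Fibonacci example $\gamma_2>1$). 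So the exponent count as written is incorrect, and the key quantitative step of the theorem is missing.

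The paper obtains the threshold via the discrete Kiselev--Last--Simon variation-of-parameters scheme \cite{KLS}: writing $v=w_1u_1+w_2u_2$ gives a first-order system $w(n+1)-w(n)=A(n)w(n)$ with entries $P(n)$ times products of $u_1,u_2$, and a Levinson-type lemma with a polynomial weight $f(n)=(1+n)^{\gamma}$ produces solutions with $w_1^-\to1$, $f(n)w_2^-(n)\to0$, $w_1^+\to0$, $w_2^+\to1$, provided $\sum_n|P(n)|\bigl(|u_1u_2|+|u_2|^2\bigr)<\infty$ and $\sum_nf(n)|P(n)||u_1|^2<\infty$. Two separate inputs then combine: (i) a summation-by-parts lemma shows $\sum_n(1+n)^{-a}|\psi_1(n)\psi_2(n)|<\infty$ whenever $a>b$ with $\|\psi_1\|_L\|\psi_2\|_L\lesssim L^{b}$ --- note there is no extra $+1$ here, which is where your count goes astray --- giving the requirements $p>2\gamma_2$ and $p-\gamma>2\gamma_2$; (ii) the weight must satisfy $\gamma>\gamma_2-\gamma_1$ so that $\|w_2^-u_2\|_L\lesssim L^{\gamma_2-\gamma}=o(L^{\gamma_1})=o(\|u_1\|_L)$, which is what forces $\|v_1\|_L/\|u_1\|_L\to1$. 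Together these give $p>2\gamma_2+\gamma$ with $\gamma>\gamma_2-\gamma_1$, i.e.\ $p>3\gamma_2-\gamma_1$. Two smaller remarks: uniformity of the constants in $E$ is not needed, since the subordinacy criterion is applied pointwise in $E$ and one only needs a support statement for the measure; and the conclusion required is not that \emph{every} perturbed solution obeys the two-sided bounds, but that there is a basis $v_1,v_2$ of perturbed solutions with $\|v_i\|_L/\|u_i\|_L\to1$, which is what preserves the $\liminf$ in the Jitomirskaya--Last criterion \cite{JL1}.
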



A particular instance of Sturmian operator is the Fibonacci operator, which corresponds to the rotation number $\theta=\frac{\sqrt{5} -1}{2}$ (the golden mean). In~\cite{DG} it is observed that, in this case with~$\lambda>0$,  
\[\gamma_{1}<\frac{\ln\left(1+\frac{1}{(2+2\lambda)^{2}}\right)}{16\ln\left(\frac{\sqrt{5} +1}{2}\right)} \ \ \ \ \ \ \ \  \textrm{and} \ \ \ \ \ \ \ \  \gamma_{2}>1+\frac{\ln[(5+2\lambda)^{1/2}(3+\lambda)c_{\lambda}]}{\ln\left(\frac{\sqrt{5} +1}{2}\right)},
\] where $c_{\lambda}$ denotes the largest root of the polynomial $x^{3}-(2+\lambda)x-1$. As an illustration, take $\lambda=1$, so that, according to Theorem~\ref{teorema1.1}, one has~$\alpha$-Hausdorff stability of the singular continuous spectrum (when it exists) under such perturbations if $p\geq 21.7.$

The result in Theorem~\ref{teorema1.1} (and  in Theorem~\ref{teorema1.2} below as well) should be contrasted with  SULE operators (see~\cite{dRJLS}), Anderson-model Hamiltonians in particular, for which rank one perturbations always result in zero-dimensional Hausdorff spectrum (point or singular continuous).

Another class of operators~\cite{JL1,tcherem} for which Hausdorff dimensional spectral properties are known is given by sparse operators~$H_{\varphi}^{\alpha}$ defined by the action~\eqref{H} in $l^{2}(\N)$, along with a phase boundary condition~\eqref{boucon} and, for each~$\alpha\in(0,1)$, sparse potentials
\begin{equation}\label{esparso2}
V(n)=\left \{
\begin{array}{cc}
x_{j}^{(1-\alpha)/2\alpha}, & n=x_{j}\in \B \\
0, & n\notin \B\\
\end{array}
\right.,
\end{equation}
where $\B=(x_{j})_{j}=\left(2^{j^{j}}\right)_{j}$. Its essential spectrum is~$[-2,2]$, and the restriction of its spectral measure to the interval $(-2,2)$ has exact Hausdorff dimension~$\alpha$ (see Definition~\ref{def.13}) for all boundary phase~$\varphi$~\cite{JL1, tcherem}. 

Again, the proof of this interesting result relies decisively on the sparseness of the potential, and here we show that it is also stable under suitable power-law decaying perturbations in case a singular continuous component is present.  More precisely, we have the following

\begin{theorem} \label{teorema1.2}
Fix~$\alpha\in(0,1)$. Let $H_{\varphi}^{{\alpha}}$ be as above and 
\begin{equation} \label {sparseP}
(H_{\varphi}^{P,{\alpha}}\psi)(n):=(H_{\varphi}^{\alpha}\psi)(n)+P(n)\psi(n),\quad \psi\in l^2(\N),
\end{equation}
with $|P(n)|\leq C(1+n)^{-p}$ for all~$n$ and some $C>0$, $p>(1+2\alpha)/\alpha$ if~$\alpha\le1/2$,~$p>(3+2\alpha)/(2\alpha)$ if~$\alpha\ge1/2$. 
Then, any possible singular continuous component of the perturbed operator~$H_{\varphi}^{P,{\alpha}}$ has also exact Hausdorff dimension~$\alpha$ for any boundary phase~$\varphi$.
\end{theorem}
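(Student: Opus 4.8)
The plan is to control the growth of generalized eigenfunctions of the perturbed operator $H_\varphi^{P,\alpha}$ by comparing them with those of the unperturbed sparse operator $H_\varphi^\alpha$, and then invoke the Jitomirskaya--Last theory relating power-law growth of transfer matrices to Hausdorff dimensional properties of the spectral measure. First I would recall that for the unperturbed sparse operator the key estimate is a two-sided power bound on the norm of solutions: between consecutive sparse sites $x_j$ and $x_{j+1}$ the transfer matrix is free (the potential vanishes there), so $\|u\|_L$ grows like $L^{\gamma}$ for suitable $\gamma$ tied to $\alpha$, while at each barrier the norm is multiplied by a factor comparable to $(1+V(x_j)^2)^{1/2}\sim x_j^{(1-\alpha)/2\alpha}$. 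The exact-dimension-$\alpha$ result of \cite{JL1,tcherem} is exactly the statement that these competing effects yield the Jitomirskaya--Last exponents corresponding to dimension $\alpha$ on $(-2,2)$.

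Next I would estimate, via a discrete Gronwall / variation-of-parameters argument, the difference between a solution $u^P$ of $H_\varphi^{P,\alpha}u^P = Eu^P$ and the corresponding solution $u$ of $H_\varphi^\alpha u = Eu$ with the same boundary data. Writing the transfer matrices $T^P_n$ and $T_n$ and using $\|T^P_n - T_n\| \le |P(n)|\,\|\text{(free-type factor)}\|$, the accumulated perturbation up to scale $L$ is bounded by $\sum_{n\le L} |P(n)|\cdot(\text{growth of unperturbed cocycle up to }n)$. Since $|P(n)|\le C(1+n)^{-p}$ and the unperturbed solutions grow at most polynomially with an exponent governed by $\gamma_2$-type bounds (which, for the sparse potential, amounts to $\sim L^{(1-\alpha)/2\alpha}$ up to logarithmic corrections coming from the lacunarity of $\B$), the hypothesis $p>(1+2\alpha)/\alpha$ (resp.\ $p>(3+2\alpha)/(2\alpha)$) is precisely what makes this sum convergent, so that $u^P$ and $u$ have the same asymptotic power-law growth rate; hence the upper and lower $L^2$-growth exponents, and therefore the Jitomirskaya--Last exponents $\alpha_\pm(E)$, are unchanged for every $E\in(-2,2)$. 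One must also check that the perturbation does not destroy the essential spectrum $[-2,2]$ (it is relatively compact, being $\ell^\infty$ and decaying, so Weyl's theorem applies) and does not spoil the boundary-phase independence (the comparison is uniform in $\varphi$).

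Finally I would translate the preserved power-law bounds into the spectral conclusion. The Jitomirskaya--Last machinery gives: if for spectral-a.e.\ $E$ (with respect to the singular continuous part) the solutions obey $\|u\|_L \asymp L^{\gamma}$ with the same $\gamma$ as in the unperturbed case, then the restriction of the spectral measure to $(-2,2)$ is simultaneously $\alpha$-continuous and not $\beta$-continuous for $\beta>\alpha$ — i.e.\ it has exact Hausdorff dimension $\alpha$. Since this statement is about the singular continuous part only (the point part, if any, is irrelevant to the claim, and Kotani/absence-of-a.c.-spectrum arguments rule out any a.c.\ component), we obtain Theorem~\ref{teorema1.2}. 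The main obstacle I expect is the quantitative Gronwall step at the sparse barriers: because the unperturbed cocycle is highly non-uniform (long free stretches punctuated by large jumps at $x_j=2^{j^j}$), one must bound the perturbation sum not by the worst-case global growth exponent but stretch-by-stretch, carefully tracking how $|P(n)|$ interacts with the barrier heights $x_j^{(1-\alpha)/2\alpha}$ and the exponential spacing of $\B$; getting the sharp thresholds on $p$ stated in the theorem, with the dichotomy at $\alpha=1/2$, requires doing this bookkeeping optimally rather than crudely.
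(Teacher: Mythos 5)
Your overall strategy --- compare solutions of the perturbed and unperturbed eigenvalue equations by a variation-of-parameters (Levinson-type) argument, check that the perturbation series converges under the stated decay of $P$, and then feed the preserved power-law bounds into Jitomirskaya--Last power-law subordinacy --- is the route the paper takes: it packages the comparison into the general Theorem~\ref{teorema4.1} and then simply inserts the exponents $\gamma_{2}>(1+\alpha)/(2\alpha)$ and $\gamma_{1}<(1-\alpha)/\alpha$ (resp.\ $\gamma_1<1/2$) read off from inequalities (5.6)--(5.7) of \cite{JL1}; the dichotomy at $\alpha=1/2$ is nothing more than the two regimes of $\gamma_1$ there, not the outcome of delicate barrier-by-barrier bookkeeping. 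There is, however, a genuine gap in your sketch. Convergence of the perturbation sum only gives $w_{2}^{-}(n)\to 0$ for the coefficient of $u_2$ in the perturbed subordinate solution $v_{1}=w_{1}^{-}u_{1}+w_{2}^{-}u_{2}$; since $u_2$ grows like $L^{\gamma_2}$ while $u_1$ only like $L^{\gamma_1}$, this is not enough to conclude $\|v_{1}\|_{L}/\|u_{1}\|_{L}\to 1$. One needs the quantitative decay $w_{2}^{-}(n)=o(n^{-\gamma})$ with $\gamma>\gamma_{2}-\gamma_{1}$, obtained by running the asymptotic-integration lemma with the weight $f(n)=(1+n)^{\gamma}$, and it is exactly this weight that raises the threshold from $p>2\gamma_{2}$ to $p>3\gamma_{2}-\gamma_{1}$, i.e.\ to the exponents in the statement. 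Your claim that the perturbed solutions ``have the same asymptotic power-law growth rate'' is unjustified for the subordinate solution without this step.

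A second, quantitative issue: bounding the accumulated perturbation by $\sum_{n\le L}|P(n)|\cdot(\text{pointwise growth of the cocycle up to }n)$ is lossier than what the stated thresholds require. Since pointwise one only has $|u_i(n)|\lesssim n^{\gamma_2}$, a naive Gronwall gives convergence only for $p>2\gamma_{2}+\gamma+1$; the paper's Lemma~\ref{lema4.3} instead controls $\sum_n|\xi(n)\psi_1(n)\psi_2(n)|$ by Abel summation directly from the truncated $L^2$ bound $\|\psi_1\|_L\|\psi_2\|_L\le C(1+L)^b$, needing only $a>b$ and hence saving a full unit in the exponent. This $L^2$ formulation is what makes the purely global exponents $\gamma_1,\gamma_2$ sufficient, so the fine interaction between $|P(n)|$ and the barrier heights $x_j^{(1-\alpha)/2\alpha}$ that you anticipate as the main obstacle never has to be confronted. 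Finally, two smaller remarks: the appeal to Kotani theory is unnecessary (the theorem only concerns the singular continuous component), and one should say where the comparison is applied --- namely at energies in $S(H)$, the set supporting the singular continuous part, using that the essential spectrum is preserved under the (compact) perturbation.
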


To illustrate Theorem~\ref{teorema1.2}, take~$\alpha=1/2$ so that the $\alpha$-Hausdorff continuity of the singular continuous component of the spectrum of the perturbed operator is stable if~$p>4$.

The proofs of Theorems~\ref{teorema1.1} and~\ref{teorema1.2} make use of  subordinacy theory (introduced by Gilbert and Pearson in~\cite{G, GP}; see~\cite{KP} for and adaptation  to discrete operators); for this, it is necessary to control the asymptotic behavior of the solutions to the (generalized) eigenvalue equation 
\begin{equation} \label{E.A.} (H\psi)(n)=E\psi(n). \end{equation}

A solution~$\psi$ to~\eqref{E.A.} is called subordinate (at~$+\infty$) if
\[ \liminf\limits_{L\to\infty} \frac{\|\psi\|_L}{\|\Phi\|_L}=0 
\]
holds for any  solution~$\Phi$ to~\eqref{E.A.} such that~$\{\psi,\Phi\}$ is a linearly independent set; $\|\cdot\|_L$ denotes the $l^2(\N)$ norm truncated at $L>0$ ($[L]$ is the integral part of~$L$), that is,
\[
 \|\psi\|_L=\left[\sum_{n=1}^{[L]}|\psi(n)|^2+ (L-[L])|\psi([L]+1)|^2\right]^{\frac{1}{2}}.
\] In case negative values of~$n$ are meaningful, the notion of a subordinate solution at~$-\infty$ is similarly introduced. 
The standard decomposition  of a spectral measure into its pure point, singular continuous and absolutely continuous can be investigated by studying solutions to~\eqref{E.A.}.

Fix $E\in\R;$ in the following, we denote by $u_{1,\varphi,E}$ and $u_{2,\varphi,E}$ the solutions to~\eqref{E.A.} which satisfy the orthogonal initial conditions
 \begin{equation} \label{c.i.} \left\{ \begin{array}{ll}
u_{1,\varphi,E}(0)=-\sin\varphi &\ \ \  u_{2,\varphi,E}(0)=\cos\varphi
\\ u_{1,\varphi,E}(1)=\cos\varphi &\ \ \ u_{2,\varphi,E}(1)=\sin\varphi
\end{array} \right., \ \varphi \in \biggl(-\frac{\pi}{2},\frac{\pi}{2}\biggr].
\end{equation}
Note that~$u_{1,\varphi,E}$ is the solution to~\eqref{E.A.} which satisfies the boundary condition~\eqref{boucon}.

Jitomirskaya and Last have proposed~\cite{JL1,JL2} a generalization of subordinacy theory, called power-law subordinacy, which provides information about Hausdorff  dimensional properties of spectral measures (see, in particular, Theorem~1.2 in~\cite{JL1}). Namely, given~$\alpha\in(0,1]$, a solution~$\psi$ to~\eqref{E.A.} is called $\alpha$-Hausdorff subordinate (or just $\alpha$-subordinate) at $+\infty$ if
\[ \liminf\limits_{L\to\infty} \frac{\|\psi\|_L}{\|\Phi\|_{L}^{\alpha/(2-\alpha)}}=0 
\]
holds for any  solution~$\Phi$ to~\eqref{E.A.} such that~$\{\psi,\Phi\}$ is a linearly independent set. In particular, the $\alpha$-Hausdorff continuous part of the spectral measure of~$H_\varphi$ (recall that it denotes the self-adjoint realization of~$H$ with boundary condition~\eqref{boucon}) is supported on the set of energies~$E$ for which~\eqref{E.A.} does not have $\alpha$-subordinate solutions, and its  $\alpha$-Hausdorff singular part  is supported on the set of energies~$E$ for which~$u_{1,\varphi,E}$ is an $\alpha$-subordinate solution.

The proofs of Theorems~\ref{teorema1.1} and~\ref{teorema1.2} will follow from Theorem~\ref{teorema4.1} below.  In particular, we are interested in energies in the set 
\[
S(H):=\left\{E \mid  \exists \,\varphi \ \mbox{s.t.} \ u_{1,\varphi,E} \ \mbox{is a subordinate solution to~\eqref{E.A.} and}  \ u_{1,\varphi,E} \notin l^{2}(\N)
\right\}.\]  
In was found~\cite{KLS} that, for any~$\varphi$, the  singular continuous part of the spectral measure of~$H_{\varphi}$ is supported in~$S(H)$. In case of  whole-line problems, the above $S(H)$ should be replaced by~\cite{G}
\[
\left\{E \mid  \exists  \  \mbox{a solution to~\eqref{E.A.} which is subordinate at both ends} \ \pm\infty \ \mbox{and is not in}  \  l^{2}(\Z)
\right\},
\]and the singular continuous parts of the spectral measures are supported in this set; note that if no solution to~\eqref{E.A.} satisfies such condition on one end, then the corresponding energy~$E$ does not belong to the singular continuous component.

 Our general result is the following

\begin{theorem}\label{teorema4.1} Let $E\in S(H)$ and~$u_{1,\varphi,E}$, $u_{2,\varphi,E}$ be solutions to~\eqref{E.A.} satisfying the initial conditions~\eqref{c.i.}. Suppose that there exist positive  constants  $\gamma_{1}, \gamma_{2}$ such that  every solution to $(H-E)u=0$ with normalized initial conditions, i.e., $|u(0)|^2+|u(1)|^2=1$, obeys the estimates  
\begin{equation}\label{estimativa p}
 C_{1}L^{\gamma_{1}}\leq \|u\|_{L}\leq C_{2}L^{\gamma_{2}}
\end{equation}
for some $C_1(E),C_2(E)$  and all $L>0$ sufficiently large. Suppose also that, for some~$p>3\gamma_{2}-\gamma_{1}$, there exists a positive constant~$C_3$ such that, for every~$n\in\mathbb{N}$, 
\begin{equation}\label{estimativa p2}
|P(n)|\leq C_{3}(1+n)^{- p}.
\end{equation} Then, $E\in S(H+P)$, and for all $\kappa\in[0,1]$,
\begin{equation} \label{limites}
\liminf_{L\rightarrow \infty}\frac{\|u_{1,\varphi,E}\|_{L}}{\|u_{2,\varphi,E}\|_{L}^{\kappa}}=\liminf_{L\rightarrow \infty}\frac{\|v_{1,\tilde{\varphi},E}\|_{L}}{\|v_{2,\tilde{\varphi},E}\|_{L}^{\kappa}},\end{equation}
where $v_{1,\tilde{\varphi},E}$ is the solution to~\eqref{E.A.} with operator $H+P$ which satisfies the initial conditions~\eqref{c.i.} with some phase~$\tilde\varphi$, and $v_{2,\tilde{\varphi},E}$ satisfying the orthogonal  conditions (always for the operator $H+{P}$). 
\end{theorem}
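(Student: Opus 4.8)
The plan is to compare solutions of $(H-E)u=0$ and $((H+P)-E)v=0$ directly, treating $P$ as a perturbation of the transfer-matrix dynamics. Write the eigenvalue equation in transfer-matrix form: for the unperturbed problem $\binom{u(n+1)}{u(n)} = T_E(n)\binom{u(n)}{u(n-1)}$ with $T_E(n)=\left(\begin{smallmatrix} E-V(n) & -1 \\ 1 & 0\end{smallmatrix}\right)$, and for the perturbed problem the analogous matrices $T_E^P(n)$ with $V(n)$ replaced by $V(n)+P(n)$, so that $T_E^P(n)-T_E(n) = \left(\begin{smallmatrix} -P(n) & 0 \\ 0 & 0 \end{smallmatrix}\right)$ has norm $\le |P(n)| \le C_3(1+n)^{-p}$. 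The key quantitative input is a Gronwall/telescoping estimate bounding the difference of the two $n$-step transfer-matrix products in terms of $\sum_{k\le n}|P(k)|$ times the product of operator norms of the individual matrices; since $\|T_E(n)\|$ and $\|T_E^P(n)\|$ are bounded by a constant depending on $E$, $\lambda$ (and, for the sparse case, grow polynomially at the sparse sites but remain summably controlled in the relevant windows), the accumulated error over $[1,L]$ is $O(L^{\beta})$ for an exponent $\beta$ that must be shown to be strictly smaller than $\gamma_1$. This is where the hypothesis $p>3\gamma_2-\gamma_1$ enters.

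Next I would convert these pointwise transfer-matrix bounds into bounds on the truncated norms $\|\cdot\|_L$. Using the lower bound $\|u\|_L \ge C_1 L^{\gamma_1}$ for \emph{every} normalized solution of the unperturbed equation, and the upper bound $\|u\|_L \le C_2 L^{\gamma_2}$, one shows that if $v$ is the solution of the perturbed equation with the same initial data (normalized) as a given $u$, then $\big|\|v\|_L - \|u\|_L\big| \le \|u-v\|_L = o(\|u\|_L)$, where the $o$ comes from the error exponent $\beta < \gamma_1$ beating the lower bound $L^{\gamma_1}$. The exponent count is roughly: the error in one transfer matrix is $|P(n)|\cdot\|(\text{product up to }n)\|\cdot\|(\text{product from }n\text{ to }L)\|$, each product of length $\le L$ contributing at most $L^{\gamma_2}$, summed against $\sum |P(n)| < \infty$ but with the worst single term near $n\sim 1$ costing the full $L^{2\gamma_2}$ — so I expect $\beta$ close to $2\gamma_2 - \gamma_1$ or $3\gamma_2-\gamma_1$ depending on how carefully the two factors and the normalization renormalization are handled, and the stated threshold $p>3\gamma_2-\gamma_1$ is exactly what makes the tail $\sum_{n> N}|P(n)|\,L^{2\gamma_2}$ negligible while the head is absorbed. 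Consequently $\|v_{i,\tilde\varphi,E}\|_L = (1+o(1))\|u_{i,\varphi,E}\|_L$ for $i=1,2$ along a suitable sequence of $L$'s, for an appropriate matched phase $\tilde\varphi$ (the initial conditions at $n=0,1$ are unaffected by $P$ if we arrange $\tilde\varphi=\varphi$, since $P(0),P(1)$ only shift the recursion, not the seed — a small bookkeeping point to check).

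From the asymptotic equivalence $\|v_{i}\|_L \sim \|u_i\|_L$ the conclusion \eqref{limites} is immediate for every $\kappa\in[0,1]$: divide, and the $(1+o(1))$ factors cancel in the $\liminf$. To see that $E\in S(H+P)$, recall $E\in S(H)$ means $u_{1,\varphi,E}$ is subordinate (the $\kappa=1$ case of the ratio going to $0$ along a subsequence, against every linearly independent companion, which by the two-dimensionality of the solution space reduces to comparison with $u_{2,\varphi,E}$) and $u_{1,\varphi,E}\notin \ell^2$; both properties transfer to $v_{1,\tilde\varphi,E}$ by the equivalence of truncated norms (the lower bound $\|v\|_L\ge (C_1/2)L^{\gamma_1}\to\infty$ in particular forbids $v_{1,\tilde\varphi,E}\in\ell^2$).

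\medskip

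The main obstacle is the transfer-matrix error estimate in Step~1: one must be careful that the product $\prod_{k=1}^{n}\|T_E(k)\|$ does not itself grow faster than $L^{\gamma_2}$ — for Sturmian potentials the individual $\|T_E(n)\|$ are uniformly bounded so this is fine, but for the sparse operators $V(n)$ is unbounded and $\|T_E(n)\|\sim V(x_j)$ at a sparse site, so the accumulated transfer-matrix norm up to $L$ must be controlled \emph{via} the hypothesis \eqref{estimativa p} itself (i.e., $\prod\|T_E(k)\|\le C L^{\gamma_2}$ follows from the upper bound applied to a basis of solutions), and the perturbation series must be summed in exactly the window between consecutive sparse sites where things are flat. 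Getting the exponent bookkeeping to land at $p>3\gamma_2-\gamma_1$ rather than something weaker is the delicate part; I expect it comes from needing \emph{both} a factor $L^{\gamma_2}$ for the propagation from the perturbation site to $L$ \emph{and} a factor $L^{2\gamma_2}$-type loss when re-normalizing initial conditions and comparing against the $L^{\gamma_1}$ lower bound, so that one spare power of $p$ beyond $2\gamma_2$ plus the gain $\gamma_1$ from the lower bound is what is required.
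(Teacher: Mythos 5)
Your overall strategy --- perturb the transfer matrices, telescope, and conclude $\|v_i\|_L\sim\|u_i\|_L$ --- has a gap at its central step, and it is precisely the step the paper's argument is designed to circumvent. The Duhamel/Gronwall bound you describe controls $|v(n)-u(n)|$ by $\sum_{k\le n}\|\Phi(n,k+1)\|\,|P(k)|\,|v(k)|$, where $\Phi(n,k+1)$ is the partial product of unperturbed transfer matrices; since $\|\Phi(n,k+1)\|\le\|\Phi(n)\|\,\|\Phi(k)\|\le C\,n^{\gamma_2}k^{\gamma_2}$ (unimodularity plus the upper bound in~\eqref{estimativa p}), the best such an estimate can yield is $|v(n)-u(n)|=O(n^{\gamma_2})$, i.e.\ $\beta=\gamma_2$, no matter how large $p$ is: the error generically has a component along the \emph{large} solution $u_2$, and a norm bound on the difference cannot detect that this component is small. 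But the comparison you need is against the \emph{subordinate} solution $u_1$, whose truncated norm may be only of order $L^{\gamma_1}\ll L^{\gamma_2}$, so $\|v-u_1\|_L=o(\|u_1\|_L)$ does not follow from your Step 1. The related ``bookkeeping point'' is also incorrect: taking $\tilde\varphi=\varphi$ (same seed at $n=0,1$) generically produces a perturbed solution with a nonzero coefficient on the growing direction, hence one that is \emph{not} subordinate; $\tilde\varphi$ must be determined by matching asymptotics at $+\infty$, not initial data, which is why the theorem only asserts the existence of \emph{some} phase $\tilde\varphi$.

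What is actually needed --- and what the paper does, following Kiselev--Last--Simon --- is a variation-of-parameters representation $v_1=w_1^-u_1+w_2^-u_2$ together with a \emph{weighted} Levinson-type asymptotic statement (Lemma~\ref{teorema2.2}) giving not merely $w_2^-(n)\to0$ but $f(n)\,w_2^-(n)\to0$ for $f(n)=(1+n)^{\gamma}$. Then $|w_2^-(n)u_2(n)|=o(n^{\gamma_2-\gamma})$, and choosing $\gamma>\gamma_2-\gamma_1$ makes this contamination $o(n^{\gamma_1})$, hence negligible against $\|u_1\|_L\ge C_1L^{\gamma_1}$. The price of the weight is the summability condition $\sum_n f(n)|P(n)||u_1(n)|^2<\infty$, which by the summation-by-parts Lemma~\ref{lema4.3} requires $p-\gamma>2\gamma_2$; combined with $\gamma>\gamma_2-\gamma_1$ this is exactly the threshold $p>3\gamma_2-\gamma_1$. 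Your exponent heuristics circle around this, but without the weighted asymptotics controlling the coefficient of $u_2$ inside $v_1$ the argument does not close.
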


We emphasize that condition~\eqref{estimativa p} in Theorem~\ref{teorema4.1} is essential and it holds true for  the operators in Theorems~\ref{teorema1.1} and~\ref{teorema1.2}; see details in Section~\ref{applic}.

Under the hypotheses of Theorem~\ref{teorema4.1}, we have a kind of stability between the sets~$S(H)$ and~$S(H+P)$; note that since the perturbation~$P$ is compact (since~$p>0$), the essential spectrum of any~$H_\varphi$ is preserved under such perturbations, so in case the singular continuous component of~$H_\varphi$ coincides with its  essential spectrum, then the singular continuous spectrum of the corresponding perturbed operators $(H+P)_{\tilde\varphi}$ will be in~$S(H)$ (in particular, any $\alpha$-Hausdorff continuous component). We again emphasize the subtlety that the spectral measure of~$(H+P)_\varphi$ of  the set $S(H+P)$ may be zero, and so Theorem~\ref{teorema4.1} gives no relevant spectral information in this case.  

Note also that, by the definition of $S(H)$, for the Sturmian model discussed in Theorem~\ref{teorema1.1}, no possible eigenvalue of~$H^{P}_{\lambda,\theta,\rho}$ belongs to the spectrum of the unperturbed operator (recall that~$\sigma_{\mathrm{p}}(H_{\lambda,\theta,\rho})=\emptyset$); however, by the preservation of the essential spectrum,  $\sigma(H_{\lambda,\theta,\rho})$ is given by the accumulation points of the possible isolated eigenvalues of finite multiplicity of~$H^{P}_{\lambda,\theta,\rho}$.  The next corollary  highlights the latter discussion.

\begin{corollary}\label{estabilidade}
Let $\theta$ be an irrational number of bounded density. Then, for every $\rho\in [0,1)$ and $\lambda\neq 0,$ no perturbation of the form~\eqref{estimativa p2}  has eigenvalues in~$\sigma(H_{\lambda,\theta,\rho})$.
\end{corollary}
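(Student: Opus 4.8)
The plan is to argue by contradiction, reducing the claim to the one‑sided bound~\eqref{estimativa p} already available for Sturmian potentials. Suppose a perturbation $P$ of the form~\eqref{estimativa p2} produced an eigenvalue $E\in\sigma(H_{\lambda,\theta,\rho})$ of $H^{P}_{\lambda,\theta,\rho}$. First I would record what Section~\ref{applic} provides for $\theta$ of bounded density: $\sigma(H_{\lambda,\theta,\rho})$ has zero Lebesgue measure, and at \emph{every} $E\in\sigma(H_{\lambda,\theta,\rho})$ the estimate~\eqref{estimativa p} holds with the constants $\gamma_{1},\gamma_{2}>0$ fixed there~\cite{D,DKL,JL2}. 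The lower bound in~\eqref{estimativa p} already forces $\sigma_{\mathrm p}(H_{\lambda,\theta,\rho})=\emptyset$; and, since $p>0$ makes $P$ compact, $\sigma_{\mathrm{ess}}(H^{P}_{\lambda,\theta,\rho})=\sigma(H_{\lambda,\theta,\rho})$, so the hypothetical $E$ would be a genuinely new eigenvalue of $H^{P}_{\lambda,\theta,\rho}$, embedded in its essential spectrum. Since $p>3\gamma_{2}-\gamma_{1}$, the hypotheses of Theorem~\ref{teorema4.1} are met at this $E$.

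I would then pass to $+\infty$ (which suffices for the whole‑line problem). An $l^{2}(\Z)$ eigenfunction $v$ of $H^{P}_{\lambda,\theta,\rho}$ at $E$ cannot vanish on $\{0,1\}$ (otherwise the recursion forces $v\equiv0$), so $w:=v|_{[1,\infty)}$ is a nonzero solution of the eigenvalue equation at $E$ for $H_{\lambda,\theta,\rho}+P$ on $\N$ with boundary phase $\tilde\varphi$ determined by $(v(0),v(1))$, and $w\in l^{2}(\N)$; in particular its (non‑decreasing) truncated norms $\|w\|_{L}$ stay bounded. Now I would invoke the heart of the proof of Theorem~\ref{teorema4.1}, the transfer‑matrix comparison behind identity~\eqref{limites}: under~\eqref{estimativa p} and~\eqref{estimativa p2} with $p>3\gamma_{2}-\gamma_{1}$, every solution of the perturbed equation at $E$ is, for all large $L$, comparable in truncated norm to the solution of the \emph{unperturbed} equation with the same initial data, and hence inherits from~\eqref{estimativa p} a strictly positive power‑law lower bound for $\|\cdot\|_{L}$. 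Applied to $w$ this gives $\|w\|_{L}\to\infty$, contradicting $w\in l^{2}(\N)$. (Alternatively one may quote Theorem~\ref{teorema4.1} directly: when $E\in S(H_{\lambda,\theta,\rho})$ it gives $E\in S(H^{P}_{\lambda,\theta,\rho})$, that is, a phase admitting a subordinate non‑$l^{2}(\N)$ solution $v_{1,\tilde\varphi,E}$; comparing $v_{1,\tilde\varphi,E}$ with the bounded solution $w$ --- linearly independent of it, since otherwise $v_{1,\tilde\varphi,E}\in l^{2}(\N)$ --- through the definition of subordinacy forces $\liminf_{L}\|v_{1,\tilde\varphi,E}\|_{L}=0$, impossible for a nonzero solution; this second route also needs $\sigma(H_{\lambda,\theta,\rho})\subseteq S(H_{\lambda,\theta,\rho})$, whereas the first rests only on~\eqref{estimativa p}.)

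In either case no $E\in\sigma(H_{\lambda,\theta,\rho})$ is an eigenvalue of $H^{P}_{\lambda,\theta,\rho}$, which is the corollary; combined with stability of the essential spectrum this also yields the remark preceding it, that $\sigma(H_{\lambda,\theta,\rho})$ is exactly the set of accumulation points of the (necessarily isolated, finite‑multiplicity) eigenvalues of $H^{P}_{\lambda,\theta,\rho}$. I expect the real difficulty to lie in the step that transfers the lower bound of~\eqref{estimativa p} to $H^{P}_{\lambda,\theta,\rho}$: a naive transfer‑matrix estimate loses several powers of the possibly large exponent $\gamma_{2}$, and the threshold $p>3\gamma_{2}-\gamma_{1}$ is precisely what lets the decay of $P$ prevail in the iteration carried out in the proof of Theorem~\ref{teorema4.1}. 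Granting that proof, the corollary is then immediate.
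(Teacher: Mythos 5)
Your argument is correct and follows the same route the paper intends: the paper offers no separate proof, treating the corollary as immediate from Theorem~\ref{teorema4.1} and the preceding discussion (the power-law bounds~\eqref{estimativa p} hold at every $E\in\sigma(H_{\lambda,\theta,\rho})$, the comparison~\eqref{converge} transfers the lower bound $\|\cdot\|_L\gtrsim L^{\gamma_1}$ to all solutions of the perturbed equation, so no $l^2$ eigenfunction can exist there). Your observation that the first route needs only~\eqref{estimativa p} and not $\sigma(H_{\lambda,\theta,\rho})\subseteq S(H_{\lambda,\theta,\rho})$ is a worthwhile refinement of the paper's phrasing ``by the definition of $S(H)$.''
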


Obviously, by Theorem~\ref{teorema1.2}, we have an analogous version of Corollary~\ref{estabilidade} to the class of sparse operators $H_{\varphi}^{P,{\alpha}}$ of type~\eqref{sparseP}.

The organization of this paper is as follows. In Section~\ref{hmds}, we recall definitions and properties of Hausdorff measures and dimensions, as well as their role in subordinacy theory. In Section~\ref{perturb}, we prove our main general result, that is, Theorem~\ref{teorema4.1}. In Section~\ref{applic}, we prove Theorems~\ref{teorema1.1} and~\ref{teorema1.2} as direct consequences of Theorem~\ref{teorema4.1} and known results in the literature.

\section{Hausdorff measures, dimensions and subordinacy theory}\label{hmds}
We recall in this section some concepts and results regarding Hausdorff measures and subordinacy theory. Most of the material exposed here is based on \cite{F, JL1, L, Ro}.

\begin{definition}\label{def.11} Given a set $S\subset \R$ and $\alpha\in[0,1],$ consider the number 
\[
Q_{\alpha,\delta}(S)=\inf\left\{\sum_{k=1}^{\infty}|I_{k}|^{\alpha} \ \mid \ |I_{k}|<\delta, \ \forall k; \  S\subset \bigcup_{k=1}^{\infty}I_{k}\right\},
\]
with the infimum taken over all covers of~$S$ by intervals $\{I_{k}\}_k$ of size at most~$\delta$. The limit
$$h^{\alpha}(S)=\disp\lim_{\delta \to 0}Q_{\alpha,\delta}(S)$$
is called the~$\alpha$-dimensional Hausdorff measure of $S$.
\end{definition}

\begin{remak} The counting measure (which assigns to each set $S$ the number of points in it), for $\alpha=0$, and the Lebesgue measure, for $\alpha=1$, are particular cases of~$h^\alpha$.
\end{remak}

The~$\alpha$-dimensional Hausdorff measure,~$h^\alpha$, is an outer measure on subsets of $\R$~\cite{Ro}. It is known that for every set~$S$, there is a unique $\alpha_{S}$ such that $h^{\alpha}(S)=0$ if $\alpha>\alpha_{S}$ and $h^{\alpha}(S)=\infty$ if $\alpha_{S}>\alpha$. The number $\alpha_{S}$ is called the Hausdorff dimension of the set~$S$, usually denoted by $\dim_{H}(S)$.

 Now we recall some notions of continuity and singularity of Borel measures with respect to Hausdorff measures and dimensions. 

\begin{definition}\label{def.12} \
Let~$\mu$ be a Borel measure in~$\R$ and $\alpha\in[0,1]$.
\begin{description}
\item{\bf (i)}~$\mu$ is called $\alpha$-Hausdorff continuous if $\mu(S)=0$ for every Borel set $S$ with $h^{\alpha}(S)=0$.
\item{\bf (ii)}~$\mu$ is called $\alpha$-Hausdorff singular if~$\mu$ is supported on some Borel set $S$, i.e., $\mu(\R\backslash S)=0$ with $h^{\alpha}(S)=0$.
\end{description}
\end{definition}

\begin{definition}\label{def.13} \ A Borel measure~$\mu$ in~$\R$ is said to have exact Hausdorff dimension~$\alpha$, for some $\alpha\in(0,1)$, and denoted by $\dim_{H}(\mu)$, if two requirements hold: \begin{description} \item{\bf (i)} for every set $S$ with $\dim_{H}(S)<\alpha$, one has $\mu(S)=0$; \item{\bf (ii)} there is a Borel set,~$S_{0}$, of Hausdorff dimension~$\alpha$ which supports~$\mu$. \end{description} \end{definition}

A  Borel measure~$\mu$ in~$\R$ is said to be zero-Hausdorff dimensional if it is supported on a set with $\dim_{H}(S)=0$, and, for $\mu\ne0$, one-Hausdorff dimensional if $\mu(S)=0$ for any set $S$ with $\dim_{H}(S)<1$.
  
\begin{remak} According to Definitions~\ref{def.12} and~\ref{def.13}, a Borel measure~$\mu$ in~$\R$ is of exact Hausdorff dimension~$\alpha$ if, for every $\varepsilon>0$, it is simultaneously $(\alpha-\varepsilon)$-continuous and $(\alpha+\varepsilon)$-singular. \end{remak}

Given a finite Borel measure~$\mu$ and $\alpha\in[0,1]$, define
\[ 
D^{\alpha}_{\mu}(E):=\limsup_{\varepsilon \to 0}\frac{\mu((E-\varepsilon,E+\varepsilon))}{(2\varepsilon)^{\alpha}} 
\]
and set~$T^{\alpha}_{\infty}=\{E\in\R\mid D^{\alpha}_{\mu}(E)=\infty\}$ (which is a Borelian). The restriction $\mu_{\alpha s}:=\mu(T^{\alpha}_{\infty}\cap\cdot)$ is $\alpha$-Hausdorff singular, and $\mu_{\alpha c}:=\mu((\R\backslash T^{\alpha}_{\infty})\cap\cdot)$ is $\alpha$-Hausdorff continuous. Thus, each finite Borel measure decomposes uniquely into an $\alpha$-Hausdorff continuous part and an $\alpha$-Hausdorff singular part: $\mu=\mu_{\alpha s}+\mu_{\alpha c}$. Moreover, an $\alpha$-Hausdorff singular measure is such that $D_{\mu}^{\alpha}(E)=\infty$ a.e (with respect to it), while an $\alpha$-Hausdorff continuous measure is such that $D_{\mu}^{\alpha}(E)<\infty$ a.e (see Chapter~3 in~\cite{Ro}).

The result in~\cite{JL1} that connects  Hausdorff singularity and continuity of the spectral measure of~$H$ with the scaling behavior of the solutions to~\eqref{E.A.} is the following:

\begin{theorem} [Theorem~$1.2$ in~\cite{JL1}] \label{teo1.2JL1}
Let~$H_\varphi$ be defined by~\eqref{H}-\eqref{boucon} in $l^{2}(\N)$, and~$\mu$ denote the spectral measure of~$H_\varphi$ associated with the cyclic vector $\delta_{1}$. Let~$E\in\R$ and $\alpha \in (0,1)$. Then, for any $\varphi\in (-\pi/2,\pi/2]$,
\[D^{\alpha}_{\mu}(E)=\infty\]
holds if, and only if, $u_{1,\varphi,E}$ is $\alpha$-subordinate, that is,
 \[\liminf_{L \to \infty}
\frac{\|u_{1,\varphi,E}\|_{L}}
{\|u_{2,\varphi,E}\|^{\alpha/(2-\alpha)}_{L}}=0.\]
\end{theorem}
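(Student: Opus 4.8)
The plan is to route the proof through the Weyl $m$-function of $H_\varphi$, reducing the statement to a two-sided comparison between the boundary behaviour of $\mathrm{Im}\, m$ and the truncated norms of $u_{1,\varphi,E}$ and $u_{2,\varphi,E}$. Writing $m(z)=\int_{\R}(x-z)^{-1}\,d\mu(x)$ for $\mathrm{Im}\,z>0$, so that $\mathrm{Im}\, m(E+i\varepsilon)=\int_{\R}\varepsilon\,[(x-E)^2+\varepsilon^2]^{-1}\,d\mu(x)$, the first step I would carry out is the purely measure-theoretic equivalence
\[
D^{\alpha}_{\mu}(E)=\infty\quad\Longleftrightarrow\quad\limsup_{\varepsilon\to 0^{+}}\varepsilon^{1-\alpha}\,\mathrm{Im}\, m(E+i\varepsilon)=\infty .
\]
One direction is immediate: restricting the Poisson integral to $|x-E|<\varepsilon$ gives $\mathrm{Im}\, m(E+i\varepsilon)\ge\mu((E-\varepsilon,E+\varepsilon))/(2\varepsilon)$, whence $\mu((E-\varepsilon,E+\varepsilon))/(2\varepsilon)^{\alpha}\le 2^{1-\alpha}\varepsilon^{1-\alpha}\mathrm{Im}\, m$, so that finiteness of the right-hand $\limsup$ forces $D^{\alpha}_{\mu}(E)<\infty$. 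For the converse I would split the integral over the dyadic annuli $A_{k}=\{2^{k}\varepsilon\le|x-E|<2^{k+1}\varepsilon\}$, bound the kernel by $2^{-2k}\varepsilon^{-1}$ on $A_{k}$, and use that $D^{\alpha}_{\mu}(E)<\infty$ yields $\mu((E-\delta,E+\delta))\le C\delta^{\alpha}$ for small $\delta$; the small-scale terms sum to a convergent geometric series $\sum_{k}2^{-(2-\alpha)k}$ (since $2-\alpha>0$) while the tail contributes $O(\varepsilon^{2-\alpha})$ because $\mu$ is finite, giving boundedness of $\varepsilon^{1-\alpha}\mathrm{Im}\, m$.

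The heart of the argument is the Jitomirskaya--Last estimate. For each $\varepsilon>0$ define the length scale $L=L(\varepsilon)$ by $\|u_{1,\varphi,E}\|_{L}\,\|u_{2,\varphi,E}\|_{L}=1/(2\varepsilon)$; since $H$ is limit point at $+\infty$ at most one solution lies in $l^{2}(\N)$, so the product is a continuous strictly increasing bijection of $(0,\infty)$ onto $(0,\infty)$, making $L(\varepsilon)$ well defined with $L(\varepsilon)\to\infty$ as $\varepsilon\to 0$. The key claim is the existence of universal constants $0<c\le C<\infty$ with
\[
c\,\frac{\|u_{2,\varphi,E}\|_{L(\varepsilon)}}{\|u_{1,\varphi,E}\|_{L(\varepsilon)}}\le\mathrm{Im}\, m(E+i\varepsilon)\le C\,\frac{\|u_{2,\varphi,E}\|_{L(\varepsilon)}}{\|u_{1,\varphi,E}\|_{L(\varepsilon)}} .
\]
To establish this I would use the Green's-function representation $\mathrm{Im}\, m(E+i\varepsilon)=\varepsilon\,\|(H_\varphi-z)^{-1}\delta_{1}\|^{2}_{l^2(\N)}$ with $z=E+i\varepsilon$, express the unique $l^{2}$ Weyl solution $\psi_{z}$ as the linear combination of $u_{1,\varphi,z}$ and $u_{2,\varphi,z}$ with coefficient $m(z)$, and control the truncated norm $\|(H_\varphi-z)^{-1}\delta_1\|_{L}$ through the solutions $u_{1,\varphi,E}$, $u_{2,\varphi,E}$ via the transfer matrices and the unit Wronskian of the initial conditions~\eqref{c.i.}. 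The scale $L(\varepsilon)$ is precisely the point where $\varepsilon\|u_{1}\|_{L}\|u_{2}\|_{L}$ is of order one, which is what makes the ratio a faithful proxy for $\mathrm{Im}\, m$.

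Granting the two inputs, the conclusion is algebra. Abbreviating $a=\|u_{1,\varphi,E}\|_{L(\varepsilon)}$ and $b=\|u_{2,\varphi,E}\|_{L(\varepsilon)}$, so that $2\varepsilon=1/(ab)$, the estimates above give
\[
\varepsilon^{1-\alpha}\,\mathrm{Im}\, m(E+i\varepsilon)\asymp\left(\frac{1}{2ab}\right)^{1-\alpha}\frac{b}{a}=2^{\alpha-1}\,a^{-(2-\alpha)}b^{\alpha}=2^{\alpha-1}\left(\frac{\|u_{2,\varphi,E}\|_{L}^{\alpha/(2-\alpha)}}{\|u_{1,\varphi,E}\|_{L}}\right)^{2-\alpha}.
\]
Because $2-\alpha>0$ and $\varepsilon\mapsto L(\varepsilon)$ is a decreasing bijection onto large $L$, taking the $\limsup$ over $\varepsilon\to 0$ amounts to the $\liminf$ of the reciprocal over $L\to\infty$; thus $\limsup_{\varepsilon}\varepsilon^{1-\alpha}\mathrm{Im}\, m=\infty$ holds if and only if $\liminf_{L\to\infty}\|u_{1,\varphi,E}\|_{L}/\|u_{2,\varphi,E}\|_{L}^{\alpha/(2-\alpha)}=0$, i.e. if and only if $u_{1,\varphi,E}$ is $\alpha$-subordinate. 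Combining this with the first step yields the stated equivalence.

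I expect the main obstacle to be the two-sided bound of the second step: the measure-theoretic reduction and the final algebra are routine, but proving $\mathrm{Im}\, m(E+i\varepsilon)\asymp\|u_{2,\varphi,E}\|_{L(\varepsilon)}/\|u_{1,\varphi,E}\|_{L(\varepsilon)}$ requires tracking how the complex-energy $l^{2}$ solution decomposes along the real solutions and bounding the resulting errors uniformly in $\varepsilon$. The delicate point is the calibration of the matching scale $L(\varepsilon)$, which is exactly what converts the analytic quantity $\mathrm{Im}\, m$ into the geometric ratio of truncated norms; one must also check that the comparison constants $c,C$ do not depend on $E$ or on the boundary phase $\varphi$ on the range of interest.
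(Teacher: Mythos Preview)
The paper does not give its own proof of this theorem: it is quoted verbatim as Theorem~1.2 of Jitomirskaya--Last~\cite{JL1} and used as a black box in Section~\ref{hmds}. So there is nothing in the paper to compare your argument against; your sketch is, in effect, an outline of the original proof in~\cite{JL1}, and its overall architecture (reduce $D^{\alpha}_{\mu}(E)=\infty$ to the boundary behaviour of the $m$-function, invoke the two-sided Jitomirskaya--Last inequality at the calibrated scale $L(\varepsilon)$ defined by $\|u_1\|_{L}\|u_2\|_{L}=1/(2\varepsilon)$, then do the algebra) is exactly right.

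One technical correction is worth flagging. The Jitomirskaya--Last inequality (Theorem~1.1 in~\cite{JL1}) controls $|m(E+i\varepsilon)|$, not $\mathrm{Im}\,m(E+i\varepsilon)$:
\[
(5-\sqrt{24})\,\frac{\|u_{2,\varphi,E}\|_{L(\varepsilon)}}{\|u_{1,\varphi,E}\|_{L(\varepsilon)}}\;<\;|m(E+i\varepsilon)|\;<\;(5+\sqrt{24})\,\frac{\|u_{2,\varphi,E}\|_{L(\varepsilon)}}{\|u_{1,\varphi,E}\|_{L(\varepsilon)}}.
\]
Your stated two-sided bound on $\mathrm{Im}\,m$ does not follow from this: the lower bound $c\,\|u_2\|_{L}/\|u_1\|_{L}\le |m|$ gives no lower bound on $\mathrm{Im}\,m$, since $|m|$ could be large solely through $\mathrm{Re}\,m$. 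Correspondingly, the measure-theoretic first step in~\cite{JL1} is the equivalence $D^{\alpha}_{\mu}(E)=\infty\Leftrightarrow\limsup_{\varepsilon\to 0}\varepsilon^{1-\alpha}|m(E+i\varepsilon)|=\infty$, which uses Herglotz-function arguments (as in~\cite{dRJLS}) beyond the Poisson-kernel estimate you wrote for $\mathrm{Im}\,m$. Once you replace $\mathrm{Im}\,m$ by $|m|$ in both steps, your algebra in the final paragraph goes through unchanged and yields the desired equivalence.
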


 Theorem~\ref{teo1.2JL1} provides an effective tool for the analysis of Hausdorff dimensional properties of spectral measures of Schr\"odinger operators. Namely, the $\alpha$-Hausdorff continuous part of~$\mu$ is supported on the set of energies~$E$ for which~\eqref{E.A.} does not have $\alpha$-subordinate solutions, and the $\alpha$-Hausdorff singular part of~$\mu$ is supported on the set of energies~$E$ for which~$u_{1,\varphi,E}$ is $\alpha$-subordinate.

Consequently, one may use Theorem~\ref{teorema4.1} to obtain information about the Hausdorff dimension of spectral measure of the Schr\"odinger operator studied, and this is an essential tool in the proof of Theorems~\ref{teorema1.1} and~\ref{teorema1.2}.

\section{A general result}\label{perturb}

We present in this section the proof of Theorem~\ref{teorema4.1}, which is based on results in~\cite{KLS}. Suppose that the behavior of the solutions to the eigenvalue equation~\eqref{E.A.}  for~$V = V_ {0}$ is known;  the idea is to use this knowledge to determine the behavior of the solutions to~\eqref{E.A.} for the potential~$V = V_ {0} + P$, with the perturbation~$P$ decaying as in~\eqref{estimativa p2}.

In order to avoid cumbersome notations, we set~$u_{1}:=u_{1,\varphi,E}$, the subordinate solution for~$V=V_{0}$, and $u_{2}:=u_{2,\varphi,E}$ the corresponding solution satisfying the orthogonal initial conditions~\eqref{c.i.}. As usual~\cite{KLS}, we apply the variation of parameters method in order to obtain a linearly independent system of solutions to~\eqref{E.A.} for~$V = V_ {0} + P$; namely, we will look for solutions~$v$ in the form
\[v(n)=w_{1}(n)u_{1}(n)+w_{2}(n)u_{2}(n)
\]
and such that
\[v(n-1)-v(n)=w_{1}(n)\left[u_{1}(n-1)-u_1(n)\right]+ w_2(n)\left[u_{2}(n-1)-u_2(n)\right].
\]
By writing~$w(n):=\left(\begin{array}{c}w_{1}(n)\\
 w_{2}(n)\end{array}\right)$, the eigenvalue equation~\eqref{E.A.} for~$V = V_ {0} + P$ is equivalent to
\begin{equation}\label{eq: vp}
w(n+1)-w(n)=A(n)w(n),\end{equation}
with
\begin{equation*} 
A(n)=-P(n)\left(\begin{array}{ll}
u_{1}(n)u_{2}(n) \ & \   u_{2}(n)^{2}
\\ -u_{1}(n)^{2} \ & \  -u_{1}(n)u_{2}(n)  \end{array}
\right).\end{equation*} 

For a positive monotone increasing function $f:\{0,1,2,\cdots\}\to (0,\infty)$, let
\[
G(n):=\max \left\{|P(n)|\left(|u_{1}(n)u_{2}(n)|+ |u_{2}(n)|^{2}\right); |P(n)|\left(f(n)|u_{1}(n)|^{2}+ |u_{1}(n)u_{2}(n)| \right)\right\}.
\]
\begin{lemma} \label{teorema2.2}
Let $f$ be as above and suppose that  
\[\sum_{n=1}^{\infty}G(n)<\infty.\]
Then, there exist solutions $w^{\pm}$ to~\eqref{eq: vp} such that,  as~$n\rightarrow\infty$,
\begin{description}
	\item{\bf (i)} $w_{1}^{-}(n)\rightarrow 1$ \ \ and \ \  $f(n)w^{-}_{2}(n)\rightarrow 0$,
	\item{\bf (ii)} $w_{1}^{+}(n)\rightarrow 0$ \ \ and \ \   $w^{+}_{2}(n)\rightarrow 1$.
\end{description}
\end{lemma}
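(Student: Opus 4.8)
The plan is to treat the difference equation $w(n+1)-w(n)=A(n)w(n)$ as a perturbation of the trivial equation $w(n+1)=w(n)$ and to extract the two advertised solutions via a fixed-point/telescoping argument in $\ell^1$, in the spirit of Levinson-type asymptotic theorems. The key point is that the hypothesis $\sum_n G(n)<\infty$ controls the relevant entries of $A(n)$ after they are weighted by $f$: indeed, reading off the rows of $A(n)$, the first coordinate of $A(n)w(n)$ is bounded by $|P(n)|(|u_1u_2|+|u_2|^2)|w(n)|_\infty$ and the second, once multiplied by $f(n)$, by $|P(n)|(f(n)|u_1|^2+|u_1u_2|)|w(n)|_\infty$, so both are dominated by $G(n)|w(n)|_\infty$ up to a fixed constant. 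This is exactly the summability needed to run a contraction.

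For part (i), I would look for $w^-$ as a solution of the summed equation
\[
w^-_1(n)=1-\sum_{k=n}^\infty \bigl(A(k)w^-(k)\bigr)_1,\qquad
w^-_2(n)=-\sum_{k=n}^\infty \bigl(A(k)w^-(k)\bigr)_2,
\]
which one checks is equivalent to \eqref{eq: vp} together with $w^-_1(n)\to1$, $w^-_2(n)\to0$. Working in the Banach space of sequences with norm $\sup_n\max\{|w_1(n)|,f(n)|w_2(n)|\}$, the map defined by the right-hand side above is, for $n$ beyond some $N_0$, a contraction because $\sum_{k\ge N_0}C\,G(k)<1/2$; note $f$ monotone increasing is used to pull $f(n)$ inside the tail sums $\sum_{k\ge n}$, i.e. $f(n)\le f(k)$ for $k\ge n$, so that $f(n)|(A(k)w(k))_2|\le C\,G(k)\max\{|w_1(k)|,f(k)|w_2(k)|\}$. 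The fixed point gives a solution on $\{n\ge N_0\}$ with $w^-_1(n)\to1$ and $f(n)w^-_2(n)\to0$ (the latter because the tail of a convergent series times $f(n)$ still goes to $0$, again by monotonicity), and it extends backwards to all $n$ by the recursion \eqref{eq: vp} since the transfer matrices $I+A(n)$ are invertible for large $n$ (as $\|A(n)\|\to0$) — or one simply notes the statement is asymptotic and only the tail matters. For part (ii), the same scheme with the boundary data placed at a finite point rather than at infinity: set $w^+_2(N_0)=1$, $w^+_1(N_0)=0$ and solve forward,
\[
w^+_1(n)=\sum_{k=N_0}^{n-1}\bigl(A(k)w^+(k)\bigr)_1,\qquad
w^+_2(n)=1+\sum_{k=N_0}^{n-1}\bigl(A(k)w^+(k)\bigr)_2;
\]
the same summability bound shows $|w^+(n)|_\infty$ stays bounded (Gronwall/discrete Gronwall on $\ell^1$ perturbations), hence the series $\sum_k(A(k)w^+(k))_1$ converges, so $w^+_1(n)$ tends to a finite limit; subtracting that limit (it is a linear combination preserved under the equation, corresponding to adding a multiple of $w^-$) normalizes $w^+_1(n)\to0$, and then $w^+_2(n)\to1$ follows since its increments are summable. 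Linear independence of $w^+$ and $w^-$ is immediate from their distinct limits $(0,1)$ versus $(1,0)$.

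The main obstacle I anticipate is purely bookkeeping rather than conceptual: making sure the weighted norm $\sup_n\max\{|w_1(n)|,f(n)|w_2(n)|\}$ is simultaneously (a) preserved by the fixed-point map — which forces the estimate on the second row of $A(n)$ to carry the factor $f(n)$, and this is precisely why $G(n)$ is defined with the $f(n)|u_1(n)|^2$ term — and (b) strong enough that $f(n)w^-_2(n)\to0$ rather than merely staying bounded. Both follow from the monotonicity of $f$ and the summability of $G$, but one has to be careful that the contraction is set up on the tail $\{n\ge N_0\}$ and then the solution is propagated to all $n$, since $G$ is only assumed summable, not small. No new idea beyond the standard $\ell^1$-perturbation (Levinson) argument is needed; the content is in having arranged the definition of $G$ so that the hypothesis delivers exactly the decay rate $o(1/f(n))$ for the subdominant component.
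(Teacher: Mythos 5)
Your argument is correct and is essentially the paper's proof: the paper simply cites Theorem~2.2 of Kiselev--Last--Simon \cite{KLS}, whose proof is exactly this Levinson-type $\ell^1$-perturbation scheme (integral/summed equation from infinity for $w^-$ in the weighted norm $\sup_n\max\{|w_1(n)|,f(n)|w_2(n)|\}$, forward solution plus normalization by a multiple of $w^-$ for $w^+$), and you have correctly identified why $G$ is defined with the factor $f(n)$ on the $|u_1|^2$ term. The only detail worth pinning down is that after subtracting the limit of $w_1^+$ one gets $w_2^+(n)\to d$ for some finite $d$, and one should note $d\neq 0$ (guaranteed by taking $N_0$ with $\sum_{k\ge N_0}G(k)$ small) before rescaling to make the limit equal to $1$.
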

\begin{proof}
The proof of Lemma~\ref{teorema2.2} traces the same steps of the proof of Theorem~2.2 in~\cite{KLS}, with obvious adaptations for the discrete case.
\end{proof}

 In the proof of Theorem~\ref{teorema4.1}  we will need to choose a function~$f$ so that~$G\in l^{1}(\N)$  and also to connect with  ideas of Jitomirskaya and Last~\cite{JL1}. So, the following result will be useful, which is a (not completely immediate) discrete version of Lemma~3.2 in~\cite{KLS}. 

\begin{lemma} \label{lema4.3}
Let $\left\{\xi(n)\right\}$ be a sequence of numbers such that, for some positive constant~$C_1$, 
\begin{equation}\label{eq: desig4.1}
|\xi(n)|\leq C_{1}(1+n)^{-a},
\end{equation}
and let~$\psi_{1},\psi_{2}$ be solutions to~\eqref{E.A.} satisfying
\begin{equation}\label{eq: desig4.2}
\|\psi_{1}\|_{L}\|\psi_{2}\|_{L}\leq C_{2}(1+L)^{b},
\end{equation} for each~$L\in\mathbb{N}$. If~$a>b>0$, then
\[\sum_{n=1}^{\infty}|\xi(n)\psi_{1}(n)\psi_{2}(n)|<\infty.\]
\end{lemma}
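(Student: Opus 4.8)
The goal is to show $\sum_{n\ge1}|\xi(n)\psi_1(n)\psi_2(n)|<\infty$ given the pointwise decay $|\xi(n)|\le C_1(1+n)^{-a}$ and the quadratic-growth bound $\|\psi_1\|_L\|\psi_2\|_L\le C_2(1+L)^b$ with $a>b>0$. The natural strategy is a dyadic block decomposition of $\mathbb N$ together with the Cauchy–Schwarz inequality applied on each block, converting the truncated-norm hypothesis into control of $\sum_n |\psi_1(n)\psi_2(n)|$ over each block.

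First I would split $\mathbb N=\bigcup_{k\ge0} B_k$ with $B_k=\{2^k,\dots,2^{k+1}-1\}$. On the block $B_k$, the pointwise bound on $\xi$ gives $|\xi(n)|\le C_1(1+2^k)^{-a}\le C_1 2^{-ka}$ for all $n\in B_k$, so
\[
\sum_{n\in B_k}|\xi(n)\psi_1(n)\psi_2(n)| \le C_1 2^{-ka}\sum_{n\in B_k}|\psi_1(n)\psi_2(n)|.
\]
Next, by Cauchy–Schwarz,
\[
\sum_{n\in B_k}|\psi_1(n)\psi_2(n)| \le \Bigl(\sum_{n\in B_k}|\psi_1(n)|^2\Bigr)^{1/2}\Bigl(\sum_{n\in B_k}|\psi_2(n)|^2\Bigr)^{1/2} \le \|\psi_1\|_{2^{k+1}}\|\psi_2\|_{2^{k+1}},
\]
since each block sum is dominated by the corresponding truncated $l^2$ norm up to index $2^{k+1}$. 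Invoking \eqref{eq: desig4.2} with $L=2^{k+1}$ yields $\|\psi_1\|_{2^{k+1}}\|\psi_2\|_{2^{k+1}}\le C_2(1+2^{k+1})^b\le C_2 4^{b} 2^{kb}$. Combining, the contribution of block $B_k$ is at most $C_1 C_2 4^b\, 2^{k(b-a)}$.

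Finally, summing over $k$,
\[
\sum_{n=1}^{\infty}|\xi(n)\psi_1(n)\psi_2(n)| \le C_1 C_2 4^b\sum_{k=0}^{\infty} 2^{k(b-a)},
\]
and this geometric series converges precisely because $b-a<0$, i.e. $a>b$. (The hypothesis $b>0$ is only needed so that $(1+L)^b$ is genuinely a growth bound and there is no issue with small $L$; one can absorb the $n=1$ term or the $k=0$ block into the constant separately if desired.) The main — and really only — subtlety, which is why the lemma is called "not completely immediate," is the passage from the hypothesis, which controls the \emph{product} $\|\psi_1\|_L\|\psi_2\|_L$ rather than each norm individually, to the block estimate: one must apply Cauchy–Schwarz on each dyadic block \emph{before} using \eqref{eq: desig4.2}, so that both truncated norms are evaluated at the same scale $L=2^{k+1}$ and the product bound can be used directly. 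A naive attempt to bound $\sum|\psi_1\psi_2|$ globally by $\|\psi_1\|_\infty\|\psi_2\|_\infty$-type quantities fails since neither solution need be in $l^2$; the dyadic localization is what makes the pointwise decay of $\xi$ win against the polynomial growth of the solutions.
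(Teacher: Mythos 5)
Your proof is correct, but it takes a genuinely different route from the paper's. The paper also begins by using Cauchy--Schwarz to control the partial sums $g(n)=\sum_{j=1}^{n}|\psi_{1}(j)\psi_{2}(j)|\le C_{2}(1+n)^{b}$, but then finishes by Abel summation: it writes $\sum_{n\le L}(1+n)^{-a}[g(n)-g(n-1)]$, sums by parts, and bounds the increments $(1+n)^{-a}-(2+n)^{-a}$ via the Mean Value Theorem by $a(1+n)^{-a-1}$, reducing everything to the convergent series $\sum(1+n)^{b-a-1}$. You instead localize to dyadic blocks $B_{k}=\{2^{k},\dots,2^{k+1}-1\}$, apply Cauchy--Schwarz on each block so that both truncated norms are evaluated at the common scale $L=2^{k+1}$, and sum the resulting geometric series $\sum_{k}2^{k(b-a)}$. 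The two arguments exploit the same two facts (pointwise decay of $\xi$ versus polynomial control of the partial sums of $|\psi_{1}\psi_{2}|$, the latter obtained by Cauchy--Schwarz in both cases), and your identification of the key subtlety --- that the hypothesis controls only the product $\|\psi_{1}\|_{L}\|\psi_{2}\|_{L}$, so Cauchy--Schwarz must be applied before invoking \eqref{eq: desig4.2} at a single scale --- is exactly the point the paper's $g(n)$ device is designed to handle. Your dyadic version is slightly more elementary (no Mean Value Theorem needed), while the summation-by-parts version adapts more directly to weights $\xi$ that are not of pure power-law type; as you note, the hypothesis $b>0$ plays no essential role in either argument beyond normalization. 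Both proofs are complete and correct.
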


\begin{proof}
Let~$g:\{0,1,\ldots\}\rightarrow\mathbb{R}$ be given by~$g(n):=\sum_{j=1}^{n}|\psi_{1}(j)\psi_{2}(j)|$ for~$n\ge 1$ and $g(0)=0$. By Cauchy-Schwarz inequality and~\eqref{eq: desig4.2},
\begin{equation}\label{eq: desig4.3}
g(n)\leq C_{2}(1+n)^{b}.
\end{equation}
Without loss, we simplify by taking $C_1=C_2=1$. By~\eqref{eq: desig4.1}, for each~$L\in\mathbb{N}$,
\begin{eqnarray*}
\sum_{n=1}^{L}|\xi(n)\psi_{1}(n)\psi_{2}(n)|&\leq&\sum_{n=1}^{L}(1+n)^{-a}|\psi_{1}(n)\psi_{2}(n)|\\
&=&\sum_{n=1}^{L}(1+n)^{-a}[g(n)-g(n-1)]\\
&=& (2+L)^{-a}g(L) + \sum_{n=1}^{L}[(1+n)^{-a}- (2+n)^{-a}]g(n)\\
&\le& (2+L)^{-a}g(L) + \sum_{n=1}^{L}a(1+n)^{-a-1}g(n);
\end{eqnarray*}
the second inequality is a consequence of the Mean Value Theorem applied to the function $h(x)=(1+x)^{-a}$, $x\ge0$, and the inequality
\[\max_{n\leq x\leq n+1}|h'(x)|\leq a(1+n)^{-a-1}.\]
Therefore, by~\eqref{eq: desig4.3}, one has
\[
\sum_{n=1}^{L}|\xi(n)\psi_{1}(n)\psi_{2}(n)|\leq (2+L)^{-a}L^{b} + a\sum_{n=1}^{L}(1+n)^{b-a-1}.
\]
Now, since~$a>b$, it follows that
\[\lim_{L\rightarrow\infty}\sum_{n=1}^{L}|\xi(n)\psi_{1}(n)\psi_{2}(n)|\leq a\sum_{n=1}^{\infty}(1+n)^{b-a-1}<\infty.\]
\end{proof}

By the definition of~$G(n)$, in order to show that $\sum_{n=1}^{^\infty}G(n)<\infty$, it  is sufficient to show that each of the series
\[
\sum_{n=1}^{\infty}|P(n)u_{1}(n)u_{2}(n)|, \ \ \sum_{n=1}^{\infty}|P(n)||u_{2}(n)|^{2}, \ \ \sum_{n=1}^{\infty}f(n)|P(n)||u_{1}(n)|^{2}\]
is finite. 

In what follows we set, for each~$\gamma>0$, $f(n)=(1+n)^\gamma$. Hence, by Lemma~\ref{lema4.3} and the hypotheses~\eqref{estimativa p} and~\eqref{estimativa p2}, the above series are finite if
\begin{equation} \label{eq: gama}
p>2\gamma_{2} \ \ \  \textrm{and} \ \ \  p-\gamma>2\gamma_{2};
\end{equation}
namely, we can choose~$\gamma>0$ so that $\gamma_{2}-\gamma_{1}<\gamma<p-2\gamma_{2}$ (recall $p>3\gamma_{2}-\gamma_{1}$), and consequently~\eqref{eq: gama} holds true, implying that~$G\in l^1(\N)$.

\begin{proof} {\it (Theorem~\ref{teorema4.1})}
By the  above considerations, there exist solutions~$w^{\pm}$ to~\eqref{eq: vp}, given in Lemma~\ref{teorema2.2}, so that
\[v_1(n)=w_{1}^{-}(n)u_{1}(n)+w_{2}^{-}(n)u_{2}(n)
\]
\[v_{2}(n)=w_{1}^{+}(n)u_{1}(n)+w_{2}^{+}(n)u_{2}(n)
\]
are linearly independent solutions to~$(H+P)v=Ev$. Thus, it suffices to prove that
\begin{equation} \label{converge}
\frac{\|v_{1}\|_{L}}{\|u_{1}\|_{L}}\longrightarrow 1 \quad \textrm{and} \quad \frac{\|v_{2}\|_{L}}{\|u_{2}\|_{L}}\longrightarrow 1,
\end{equation}
as $L\rightarrow\infty$. Namely, by~\eqref{converge}, one has, for each~$\kappa\in[0,1]$, 
\[
{\left[\frac{\|v_{1}\|_{L}}{\|v_{2}\|_{L}^{\kappa}}\right]}{\left[\frac{\|u_{1}\|_{L}}{\|u_{2}\|_{L}^{\kappa}}\right]^{-1}}\underset{L\rightarrow\infty}{\longrightarrow}1,
\]
which proves  assertion~\eqref{limites}. Note that~\eqref{converge} also ensures that~$v_1$ is not square summable.

In order to prove~\eqref{converge}, we begin observing that
\begin{eqnarray*}
\frac{|\|v_{1}\|_{L}- \|u_{1}\|_{L}|}{\|u_{1}\|_{L}}&\leq& \frac{\|v_{1}-u_{1}\|_{L}}{\|u_{1}\|_{L}}\\
&\leq& \frac{\|v_{1}-w_{1}^{-}u_{1}\|_{L}}{\|u_{1}\|_{L}} + \frac{\|(w_{1}^{-}-1)u_{1}\|_{L}}{\|u_{1}\|_{L}}\\
&=&\frac{\|w_{2}^{-}u_{2}\|_{L}}{\|u_{1}\|_{L}} + \frac{\|(w_{1}^{-}-1)u_{1}\|_{L}}{\|u_{1}\|_{L}}.
\end{eqnarray*}
Since $w_{1}^{-}(n)\rightarrow 1$, then for every $\varepsilon>0$, there exists an integer~$n_{0}$ such that $|w_{1}^{-}(n)-1|<\varepsilon$, for every~$n\geq n_{0}$. Hence, for each integer~$L>n_0$,
\[
\frac{\|(w_{1}^{-}-1)u_{1}\|^{2}_{L}}{\|u_{1}\|^{2}_{L}}\leq \frac{\sum_{n=1}^{n_{0}}|(w_{1}^{-}(n)-1)u_{1}(n)|^{2}}{\|u_{1}\|^{2}_{L}}+\varepsilon^{2},
\]
and consequently, 
\[\frac{\|(w_{1}^{-}-1)u_{1}\|_{L}}{\|u_{1}\|_{L}}\underset{L\rightarrow\infty}{\longrightarrow}0.
\]
One also has that $f(L)w^{-}_{2}(L)\rightarrow 0$ and $f(L)=(1+L)^{\gamma}$, with $\gamma>\gamma_2-\gamma_1$. Thus, there exists a positive constant~$C$ such that 
\[\frac{\|w_{2}^{-}u_{2}\|^{2}_{L}}{\|u_{1}\|^{2}_{L}}\leq \frac{C\sum_{n=1}^{L}(1+n)^{-2\gamma}|u_{2}(n)|^{2}}{\|u_{1}\|^{2}_{L}}.
\]
As in Lemma~\ref{lema4.3}, one can write (again with $C_1=C_2=1$)
\[
\sum_{n=1}^{L}(1+n)^{-2\gamma}|u_{2}(n)|^{2}\leq (2+L)^{2\gamma_{2}-2\gamma}+2\gamma \sum_{n=1}^{L}(1+n)^{2\gamma_{2}-2\gamma -1},
\]
and since $\|u_{1}\|^{2}_{L}\geq L^{2\gamma_{1}}$, one has
\[\frac{\|w_{2}^{-}u_{2}\|_{L}}{\|u_{1}\|_{L}}\underset{L\rightarrow\infty}{\longrightarrow}0;
\] therefore, 
\[\frac{\|v_{1}\|_{L}}{\|u_{1}\|_{L}}\underset{L\rightarrow\infty}{\longrightarrow}1.
\]

Similarly to what has been presented above, one has
\begin{eqnarray*}
\frac{|\|v_{2}\|_{L}- \|u_{2}\|_{L}|}{\|u_{2}\|_{L}}&\leq& \frac{\|v_{2}-u_{2}\|_{L}}{\|u_{2}\|_{L}}\\
&\leq& \frac{\|v_{2}-w_{2}^{+}u_{2}\|_{L}}{\|u_{2}\|_{L}} + \frac{\|(w_{2}^{+}-1)u_{2}\|_{L}}{\|u_{2}\|_{L}}\\
&=&\frac{\|w_{1}^{+}u_{1}\|_{L}}{\|u_{2}\|_{L}} + \frac{\|(w_{2}^{+}-1)u_{2}\|_{L}}{\|u_{2}\|_{L}}.
\end{eqnarray*} 
Now, since $w_{2}^{+}(n)\rightarrow 1$ and  $w_{1}^{+}(n)\rightarrow 0$, as $n\rightarrow\infty$, and since~$u_{1}$ is a subordinate solution, it follows that both terms on the right-hand side of the above inequality  tend to zero as~$L\rightarrow\infty$; hence,
\[\frac{\|v_{2}\|_{L}}{\|u_{2}\|_{L}}\underset{L\rightarrow\infty}{\longrightarrow}1,
\]as required.

\end{proof}

\section{Applications: Proofs of Theorems~\ref{teorema1.1} and~\ref{teorema1.2}}\label{applic}

In this section we use Theorem~\ref{teorema4.1} to conclude Theorems~\ref{teorema1.1} and~\ref{teorema1.2}, whose proofs are now rather easy. We begin with~$H^{P}_{\lambda,\theta,\rho}$ given by~\eqref{P.S.}. Recall that given an irrational~$\theta \in [0,1)$, it has an infinite continued fraction expansion~\cite{Khi}
\[ \theta = \dfrac{1}{a_1 + \dfrac{1}{a_2 + \dfrac{1}{a_3 +\cdots}}} \]
with uniquely determined $a_{n}\in \N$. The number $\theta$ is said to have bounded density if
\[ \limsup_{n\to \infty}\frac{1}{n}\sum_{i=1}^{n}a_i<\infty. \]

\begin{proof} {\it (Theorem~\ref{teorema1.1})}
It is  known \cite{IRT,D,DKL} that, for Schr\"odinger operators with Sturmian potentials whose rotation number is of bounded density, there exist power-law bounds of the form
\[C_1L^{\gamma_1}\leq \|u\|_L \leq C_2L^{\gamma_2}\]
for every solution $u$ to~\eqref{E.A.} (with normalized initial conditions~\eqref{NIC}); with these estimates, it is possible to prove the nonexistence of $\alpha$-subordinate solutions for $\alpha = \frac{2\gamma_1}{\gamma_1 + \gamma_2}$.
More specifically, it was proven~\cite{D,DKL,JL2} that if~$\theta$ has bounded density, then for every $\lambda\neq 0,$ there exists $\alpha=\alpha(\lambda,\theta)>0$ such that for every $\rho\in [0,1)$, the spectral measure of $H_{\lambda,\theta,\rho}$ is purely $\alpha$-Hausdorff continuous. 

We note~\cite{D,DKL} that if one is able to establish uniform power-law bounds on the restriction of the operator to the right half-line, then the resulting~$\alpha$-continuity is independent of the potential on the left half-line. In this sense, the more continuous half-line dominates and bounds the dimensionality of the whole-line problem from below.  

 Suppose that $\sigma(H^{P}_{\lambda,\theta,\rho})$ has some singular continuous component; now, since the perturbation decays as~$|P(n)|\leq C(1+|n|)^{-p}$, with $p>3\gamma_{2}-\gamma_{1}$, it is a compact perturbation and the essential spectrum is preserved. Thus,~$S(H_{\lambda,\theta,\rho})\supset\sigma_{\mathrm{sc}}(H^{P}_{\lambda,\theta,\rho})$, and by Theorem~\ref{teorema4.1}, we obtain that the asymptotic behavior of generalized eigenfunctions of the operators~$H^{P}_{\lambda,\theta,\rho}$ (that is, the solutions to~\eqref{E.A.})  in~\eqref{P.S.} is analogous to the behavior of eigenfunctions of the operators~$H_{\lambda,\theta,\rho}$;  and again by the $\alpha$-subordinacy theory, such component is still $\alpha$-Hausdorff continuous for these perturbed operators, with $\alpha = \frac{2\gamma_1}{\gamma_1 + \gamma_2}$.
\end{proof}

\

\begin{proof} {\it (Theorem~\ref{teorema1.2})}
In Theorem~$1.3$ in~\cite{JL1}, it was shown that the spectral measure of the operator $H_{\varphi}^{\alpha}$ restricted to $(-2,2)$, with potential $V_{0}=V$ given by~\eqref{esparso2}, has exact Hausdorff dimension~$\alpha$ for (Lebesgue) a.e.\ boundary phase $\varphi\in(-\pi/2,\pi/2]$. However, Tcheremchantsev presented in~\cite{tcherem} (item~2 of Corollary~4.5) an improvement of this result, showing that this spectral measure restricted to $(-2,2)$ has, in fact, exact Hausdorff dimension~$\alpha$ for any boundary phase~$\varphi\in(-\pi/2,\pi/2]$.

It follows from inequalities~(5.6) and~(5.7) in~\cite{JL1} that, for sufficiently large~$L$, the estimate~\eqref{estimativa p} is satisfied with $\gamma_{2}>(1+\alpha)/(2\alpha)$ for each~$\alpha\in(0,1)$, and~
\begin{eqnarray*}
\gamma_1<\left\{\begin{array}{ll}(1-\alpha)/\alpha&\mathrm{if}\;\;\alpha\le1/2,\\
1/2&\mathrm{otherwise}\end{array}\right. .
\end{eqnarray*} Therefore, as in the proof of Theorem~\ref{teorema1.1}, by a direct consequence of Theorem~\ref{teorema4.1}, we obtain that the asymptotic behavior of generalized eigenfunctions of the operators~$H_{\varphi}^{P,{\alpha}}$ in~\eqref{P.S.} is similar to the behavior of eigenfunctions of the operators~$H_{\varphi}^{\alpha}$;  and again by  power-law subordinacy theory, it follows that any possible singular continuous component of the restriction of the spectral measure of the operator~$H_{\varphi}^{P,{\alpha}}$ to~$(-2,2)$ has also exact Hausdorff dimension~$\alpha$ for any boundary phase~$\varphi\in(-\pi/2,\pi/2]$.
\end{proof}

 \
 
 \noindent Acknowledgment: VRB thanks  CAPES (a Brazilian agency) for financial support. CRdO  thanks the partial support of CNPq (Universal Project 41004/2014-8).
 
 \

\end{document}